\theoremstyle{remark}
\newtheorem{theorem}{Theorem}
\newtheorem{example}{Example}
\newtheorem{lemma}{Lemma}
\newtheorem{remark}{Remark}
\begin{document}

\title{Advance sharing of quantum shares for quantum secrets}


\author{Mamoru Shibata}
\email[]{shibata@it.ict.e.titech.ac.jp}
\homepage[]{https://orcid.org/0000-0002-0154-0514}
\affiliation{Department of Information and Communications Enginnering, Tokyo Institute of Technology, Tokyo 152-8550 Japan} 
\author{Ryutaroh Matsumoto}
\affiliation{Department of Information and Communications Enginnering, Tokyo Institute of Technology, Tokyo 152-8550 Japan} 
\affiliation{Department of Mathematical Sciences, Aalborg University, 9220 Aalborg, Denmark}

\begin{abstract}
Secret sharing is a cryptographic scheme to encode a secret to multiple shares being distributed to participants, so that only qualified sets of participants can restore the original secret from their shares.
When we encode a secret by a secret sharing scheme and distribute shares, sometimes not all participants are accessible, and it is desirable to distribute shares to those participants before a secret information is determined.
Secret sharing schemes for classical secrets have been known to be able to distribute some shares before a given secret.
Lie et al.\ found any pure $(k,2k-1)$-threshold secret sharing for quantum secrets can distribute some shares before a given secret. 
However, it is unknown whether distributing some shares before a given secret is possible with other access structures of secret sharing for quantum secrets.
We propose a quantum secret sharing scheme for quantum secrets that can distribute some shares before a given secret with other access structures. 
\end{abstract}

\maketitle 

\section{Introduction}
To protect important information from destruction or loss, we should not store it in one place, but we should store copies of it across multiple places and media. 
However, if the important information is secret, this strategy clearly increases the risk of information leakage.
A revolutionary method to solve this problem is the secret sharing (SS), which was invented independently by Shamir \cite{SS_Shamir_1979} and Blakley \cite{SS_Blakley_1979} in 1979.
SS is a cryptographic scheme to encode a secret to multiple shares being distributed to participants, 
so that certain sufficiently large sets of participants can restore the secret from their shares.
In quantum information theory, Hillery et al.\ \cite{QSS_Hillery_1999} and Cleve et al.\ \cite{QSS_Cleve_1999} simultaneously presented the quantum secret sharing (QSS) scheme in 1999.
Cleve et al.\ clarified the relationships between QSS and quantum error-correcting codes. 
In that relations, a share of QSS is each qubit of a codeword in a quantum error-correcting code \cite{QSS_Cleve_1999}.
Quantum mechanics extends the capabilities of secret sharing beyond those of classical secret sharing \cite{Gottesman_QSS_2000}. 
QSS is actively studied recently \cite{efficientQSS_Senthoor_2021, concatenatingQSS_Senthoor_2022,Senthoor_2022}.
A set of participants that can restore a secret is called a qualified set, and 
a set of participants that can gain no information about a secret is called a forbidden set.
The set of qualified sets and that of forbidden sets are called an access structure. 
A set of participants that are not qualified set is called an unqualified set.

SS for quantum secrets can be classified into two categories. One is perfect QSS and the other is non-perfect or ramp QSS \cite{rampQSS_Ogawa_2005}.
In a perfect QSS, every unqualified set is a forbidden set. 
A major disadvantage of perfect SS is that the size of each share must be larger than or equal to that of secret \cite{Gottesman_QSS_2000}. 
By tolerating partial information leakage to unqualified sets, the size of shares can be smaller than that of secret.
Such QSS is called ramp QSS or non-perfect QSS.
The ramp QSS was proposed by Ogawa et al.\ \cite{rampQSS_Ogawa_2005}.
In an $(a,k,n)$ ramp QSS, a dealer encodes $k$ qudits of a quantum secret 
into $n$ shares in such a way that any $a$ or more shares can restore the secret 
while any $(a-k)$ or fewer shares has no information about the secret.

Sometimes some participants are inaccessible after the dealer obtains a secret. 
The following situation was considered in \cite{Miyajima_2022}.
In a country, the president suffers from a serious disease and is anxious sbout his sudden death. 
He is afraid that his death makes a national secret inaccessible to anyone if he alone knows about a national secret. 
For this reason, the president wishes to share a national secret to the dignitaries by a secret sharing scheme. 
A national secret is sensitive information and the president needs to hand encoded information of a national secret to the dignitaries. 
The president will obtain a national secret three days later but some dignitaries will make an extended business trip to foreign country from tomorrow.
How can the president share the secret? 
In this situation, it is desirable for the dealer to distribute shares to some participants while the dealer can communicate with them. 
To realize this distribution, the dealer needs to be capable to distribute shares to some participants before a given secret.

We call a distribution of shares to some participants before a given secret ``advance sharing" and a set of shares 
that can be distributed in advance is called ``advance shareable" \cite{Miyajima_2022}. 
A pure state QSS is a QSS such that both secret and whole shares are pure states \cite{QSS_Cleve_1999}. 
A perfect $(a,2a-1)$ threshold quantum secret sharing is a $(a,1,2a-1)$ ramp QSS.
Lie et al.\ \cite{Lie_2020} found that any pure state perfect $(a,2a-1)$ threshold quantum secret sharing can distribute $(a-1)$ shares before a given secret. 
However, it is unknown whether advance sharing is possible with non-threshold or ramp QSS.
We propose a scheme of advance sharing of quantum shares for quantum secret sharing, which 
is able to construct a ramp QSS and non-threshold QSS.

Brun et al.\ \cite{EAQECC_Brun_2006} proposed entanglement-assisted quantum error-correcting codes (EAQECCs).
An EAQECC encodes $k$ information qudits with the help of $c$ maximally entangled pairs.
An $[[n,k;c]]_p$ EAQECC works as follows: 
\begin{enumerate}
    \item Before the qunautum communication begins, a sender and a receiver share some maximally entangled pairs.
    \item The sender encodes $k$ information qudits $\ket{\psi_k}$ together with $\ell=n-c-k$ ancilla qudits and the sender's half of the $c$ entangled pairs into $n$ qudits $\rho_n$.
    \item The sender sends $\rho_n$ to the receiver through a noisy communication channel.
    \item The receiver combines the noisy received qudits with the receiver's half of the $c$ entangled pairs and performs measurements on all $(n+c)$ qudits to distinguish the error.
    \item The receiver performs a recovery operation and restores the $k$ information qudits.
\end{enumerate}
An error whose position is known is called an erasure.
EAQECCs can also correct erasures.

We can construct a QSS capable of advance sharing by distributing $c$ halves of maximally entangled pairs to some participants 
before a given secret, then distributing each qudit of $\rho_n$ to remaining participants after a given secret.
A set of participants can restore the secret by erasure correction procedure of EAQECC. 
In practical use, the access structure of QSS should be clear. 
However, since erasures of receiver's $c$ halves of maximally entangled pairs in EAQECC are not considered, 
it is difficult to clarify the access structures of the QSS considered in this paragraph. 
So, we give a construction of EAQECC from a stabilizer, 
which enables us to analyze the access structure of QSS capable of advance sharing.
By using our proposed construction of an EAQECC, we propose a QSS for quantum secrets 
that can distribute some shares before a given secret. 
Then, we clarify a necessary and sufficient condition on advance-shareable sets in our proposal.

Our proposed QSS can have an access structure that cannot be constructed by the schemes of Lie et al.\ \cite{Lie_2020}.
The schemes of Ogawa et al.\ cannot construct an $(a,k,n)$ ramp QSS 
whose share has dimension $n$ or less \cite{rampQSS_Ogawa_2005}.
We will give an example of advanced sharing for a $(3,2,4)$ ramp QSS 
with 1-qubit shares.
For an $[[n,k;c]]_p$ EAQECC, we usually desire a small value of $c$, and the advantage of a large $c$ was little known.
Our proposed scheme shows the significance of constructing an EAQECC with a large $c$ 
because the size of the advance shareable set increases by larger $c$.

This paper is organized as follows.
In Section \ref{sec:2}, we review stabilizer codes and EAQECCs. 
In Section \ref{sec:3}, we give a construction of EAQECC from a stabilizer, which later enables us to analyze the access structure of resulting QSS.
Then, we propose a scheme of advance sharing for QSS by EAQECCs, and we clarify necessary and sufficient conditions of advance shareable sets
of our proposal. 
We propose a sufficient condition of advance shareable set that can be verified without computing dimensions of linear spaces.
The conclusions follow in Section \ref{sec:four}.

\section{Preliminaries}\label{sec:2}
In this section, we review stabilizer codes and EAQECCs.
Throughout this paper, we suppose that $p$ is a prime number.
\subsection{Stabilizer codes}
Let $\{\ket{i}\mid i=0,\dots,p-1\}$ be an orthonormal basis for $p$-dimensional Hilbert space $\mathbb{C}^p$.
Let $\omega$ be a complex number such that is $\omega^p=1$ and $\omega^1,\omega^2,\dots,\omega^{p-1}$ are different.
We define two unitary matrices $X_p,Z_\omega$ 
that change $\ket{i}$ as 
$X_p \ket{i} = \ket{i+1\mod p}$ 
and $Z_\omega\ket{i} = \omega^i\ket{i}$ for $i=0,\dots,p-1$.
Consider the set $E_n = \{\omega^iX_p^{a_1}Z_\omega^{b_1}\otimes \dots\otimes X_p^{a_n}Z_\omega^{b_n}\mid i,a_j,b_j\in\{0,\dots,p-1\}\text{ for }j = 1,\dots,n\}$.
$E_n$ is a non-commutative finite group with matrix multiplication as its group operation. 
Denote by $\mathbb{F}_p$ the finite field with $p$ elements. 
For $\vec{a}=(a_1,\dots ,a_n),\vec{b} =(b_1,\dots ,b_n)\in\mathbb{F}_p^n$ define 
$X_p(\vec{a}) = X_p^{a_1}\otimes\dots\otimes X_p^{a_n}$ 
and $Z_\omega(\vec{b})=Z_\omega^{b_1}\otimes\dots\otimes Z_\omega^{b_n}$.
For two vectors $(\vec{a}\mid \vec{b}),(\vec{c}\mid \vec{d})\in \mathbb{F}_p^{2n}$, the sympectic inner product is defined by
\begin{equation*}
    \langle(\vec{a}\mid \vec{b}),(\vec{c}\mid \vec{d})\rangle_s=\langle\vec{a},\vec{d}\rangle_E-\langle\vec{b},\vec{c}\rangle_E, 
\end{equation*}
where $\langle\cdot \mid \cdot\rangle_E$ is the Euclidean inner product.
We define the weight of $\omega^{i}X_p(\vec{a})Z_\omega(\vec{b})\in E_n$ as $w(\omega^i X_p(\vec{a})Z_\omega(\vec{b}))=\sharp\{i\mid (a_i,b_i)\neq0\}$.
We call a commutative subgroup of $E_n$ as a stabilizer. 
Let $S$ be a stabilizer of $E_n$. 
Let $S' = \{M\in E_n\mid MN = NM  \text{ for } \forall N\in S\}$, and let $\overline{S}$ be the commutative subgrous of $E_n$ generated by $\omega I_p^{\otimes n}$ and $S$.
Here $I_p$ is the identity matrix on $\mathbb{C}^p$. 
We define the minimum distance of a stabilizer $S$ by $d(S)=\min\{w(M)\mid M\in S'\backslash\overline{S}\}$.

Suppose that eigenspaces of a stabilizer $S$ has dimension $p^k$. 
An $[[n,k]]_p$ quantum stabilizer code $Q(S)$ encoding $k$ qudits into $n$ qudits can be defined as 
a simultaneous eigenspace of all elements of $S$.
Sometimes we will write $[[n,k,d]]_p$ stabilizer code to indicate that the distance of the code is $d$. 
An $[[n,k,d]]_p$ stabilizer code is capable of correcting less than $d$ erasures.

Now, we explain a way to describe a stabilizer $S$ by finite fields. 
For an $(n-k)$-dimensional $\mathbb{F}_p$-linear subspace $C$ of $\mathbb{F}_p^{2n}$, 
we define $C^{\perp} = \{\vec{a}\in\mathbb{F}^{2n}_p\mid \forall \vec{b}\in C,\langle\vec{a},\vec{b}\rangle_s=0\}$. 
We define $M(\vec{a}|\vec{b})$ as $M(\vec{a}|\vec{b}) = X_p(\vec{a})Z_\omega(\vec{b})\in {E_n}$ with $\vec{a}, \vec{b}\in \mathbb{F}_p^n$. 
We define a mapping $f(\omega^i M(\vec{a}|\vec{b}))$ from ${E_n}$ to $\mathbb{F}_p^{2n}$ by $f(\omega^i M(\vec{a}|\vec{b})) = (\vec{a}|\vec{b})$.
For a stabilizer $S$, $f(S)$ is an $\mathbb{F}_p$-linear space.
We define a check matrix of a stabilizer $S$ as a matrix $H_{S}=[H_X\mid H_Z]$ whose row space is $f(S)$.

\begin{example}\label{ex_stabilizer}
    We define a stabilizer generator $\{M_1,M_2\}$ as follows: 
    \begin{eqnarray*}
        \begin{array}{ccclclclclc}
            M_1 &=& X_2&\otimes&X_2&\otimes&X_2&\otimes&X_2,\\
            M_2 &=& Z_{-1}&\otimes&Z_{-1}&\otimes&Z_{-1}&\otimes&Z_{-1}.
        \end{array}
    \end{eqnarray*}
    We define a basis of the simultaneous $+1$ eigenspace $Q$ of this stabilizer $\{\ket{00_L},\ket{01_L},\ket{10_L},\ket{11_L}\}$ as follows:
    \begin{eqnarray*}
        \begin{array}{ccc}
            \ket{00_L} &=& \frac{1}{\sqrt{2}}(\ket{0000}+\ket{1111}),\\
            \ket{01_L} &=& \frac{1}{\sqrt{2}}(\ket{0011}+\ket{1100}),\\
            \ket{10_L} &=& \frac{1}{\sqrt{2}}(\ket{0101}+\ket{1010}),\\
            \ket{11_L} &=& \frac{1}{\sqrt{2}}(\ket{0110}+\ket{1001})
        \end{array}
    \end{eqnarray*}
    The dimension of $Q$ is 4 and the minimum distance of this stabilizer is 2.
    Thus, $Q$ is a $[[4,2,2]]_2$ stabilizer code.
    A check matrix of this stabilizer can be written as follows:
    \begin{equation*}
        H=
        \left[
        \begin{array}{cccc|cccc}
            1&1&1&1 &0&0&0&0\\
            0&0&0&0 &1&1&1&1
        \end{array}
        \right].
    \end{equation*}
    This stabilizer code is capable of correcting 1 erasure.
\end{example}

\subsection{EAQECC}
We review $p$-ary EAQECC \cite{NonbinaryEAQECC_Luo_2017,EAQECC_Brun_2006}.
Suppose that a sender and a receiver share $c$ pairs of maximally entangled states.
For an arbitrary non-abelian subgroup $G\subset E_n$, 
there exist a set of generators $\{\overline{Z}_1,\overline{Z}_2,\dots,\overline{Z}_{c+\ell},\overline{X}_1,\dots,\overline{X}_{c}\}$ 
for $G$ where $c \geq 1,k\geq 0, \ell=n-c-k$ with the following commutative relations:
\begin{equation*}
    \begin{array}{ll}
        \lbrack\overline{X}_i,\overline{X}_j\rbrack = 0, & \forall 1\leq i,j\leq c,\\
        \lbrack\overline{Z}_i,\overline{Z}_j\rbrack = 0, & \forall 1\leq i,j\leq c+\ell,\\
        \lbrack\overline{X}_i,\overline{Z}_j\rbrack = 0, & \forall i \neq j, 1\leq i \leq c,1\leq j \leq c+\ell,\\
        \lbrack\overline{X}_i,\overline{Z}_i\rbrack \neq 0, & \forall 1\leq i \leq c.
    \end{array}
\end{equation*}
Here $\lbrack A,B\rbrack$ is the commutator, $\lbrack A,B\rbrack = AB-BA$ for $A,B\in E_n$.
Let $\mu_i$ be an integer such that $\overline{X}_i\overline{Z}_i=\omega^{-\mu_i}\overline{Z}_i\overline{X}_i$.
We define $X_{(i)},Z_{(j)}$ for $i=1,2,\dots ,c$ and $j=1,2,\dots,c+\ell$ as:
\begin{equation*}
    \begin{array}{ll}
        X_{(i)} = I^{\otimes{i-1}}\otimes X_p^{\mu_i}\otimes I^{\otimes{n-i}},\\
        Z_{(j)} = I^{\otimes{j-1}}\otimes Z_\omega\otimes I^{\otimes{n-j}}.
    \end{array}
\end{equation*}
We define a subgroup $B_G$ of $E_n$ generated by $\{Z_{(1)},\dots,Z_{(c+\ell)},X_{(1)},\dots,X_{(c)}\}$. 

Since the groups $B_G$ and $G$ are isomorphic as groups,
 we can relate $B_G$ to $G$ by following lemma \cite{NonbinaryEAQECC_Luo_2017}:
\begin{lemma}\label{isomorphic}
    If $B_G$ is defined as above, then there exists a unitary $U$ such that for all $b\in B_G$ there exists an $g\in G$ such that $b=U g U^\dagger$ up to an overall phase.\hfill $\Box$
\end{lemma}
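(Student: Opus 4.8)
The plan is to pass from the operator groups to their symplectic images under $f$, reduce the existence of $U$ to the standard correspondence between $\mathrm{Sp}(2n,\mathbb{F}_p)$ and the Clifford group, and then reconcile the leftover phases. The group isomorphism $G\cong B_G$ that sends $\overline{X}_i\mapsto X_{(i)}$ and $\overline{Z}_j\mapsto Z_{(j)}$ is already available from the remark preceding the statement; what must be added is that this particular isomorphism is implemented by conjugation by a single unitary.

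First I would verify that the two generating sets carry identical symplectic data. A one-line computation shows $X_p Z_\omega=\omega^{-1}Z_\omega X_p$, hence $X_{(i)}Z_{(i)}=\omega^{-\mu_i}Z_{(i)}X_{(i)}$, while all other pairs of canonical generators commute; these coincide exactly with the relations imposed on $\overline{X}_i,\overline{Z}_j$. Since for $M,M'\in E_n$ the commutator phase is $\omega$ to a power given by $\langle f(M),f(M')\rangle_s$, the images $\{f(\overline{X}_i),f(\overline{Z}_j)\}$ and $\{f(X_{(i)}),f(Z_{(j)})\}$ span subspaces of $\mathbb{F}_p^{2n}$ with the same symplectic Gram matrix: $c$ hyperbolic pairs with pairings $\mu_1,\dots,\mu_c$, together with $\ell$ extra vectors $f(\overline{Z}_{c+1}),\dots,f(\overline{Z}_{c+\ell})$ that are isotropic and orthogonal to all the others.

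Next I would extend these partial frames to full symplectic structures. Because $p$ is prime, $\mathbb{F}_p$ is a field and Witt's extension theorem applies: the isometry between the two spanned subspaces (they have equal Gram matrices) extends to a global symplectic transformation $T\in\mathrm{Sp}(2n,\mathbb{F}_p)$ taking the frame of $f(G)$ onto the frame of $f(B_G)$; concretely one assigns hyperbolic partners to the $\ell$ isotropic vectors and fills out the remaining $k=n-c-\ell$ pairs in the orthogonal complement. The crucial step is then to invoke the fundamental fact of the stabilizer formalism that the natural homomorphism from the Clifford group onto $\mathrm{Sp}(2n,\mathbb{F}_p)$ is surjective, so some unitary $U$ realizes $T$ by conjugation. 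It follows that $U\overline{X}_iU^\dagger$ and $U\overline{Z}_jU^\dagger$ equal $X_{(i)}$ and $Z_{(j)}$ up to scalar factors in $\{\omega^s\}$, and since conjugation is a group automorphism this extends from generators to all of $G$, giving $b=UgU^\dagger$ up to phase for every $b\in B_G$.

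I expect the main obstacle to be the bookkeeping of phases rather than the linear algebra: the map $f$ discards the scalar in front of each Pauli, so a symplectic transformation only determines the conjugated operators up to phase, and one must check that the residual scalars can be absorbed into the ``overall phase'' allowed by the statement. The matching of the commutation phases $\mu_i$ established in the first step is exactly what guarantees that no obstruction beyond a global scalar survives, and the surjectivity of the Clifford-to-symplectic homomorphism is the one external ingredient I would cite rather than reprove.
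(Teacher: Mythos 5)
You should first be aware that the paper contains no proof of this lemma to compare against: it is stated with a closing $\Box$ and attributed to \cite{NonbinaryEAQECC_Luo_2017}, so your argument supplies what the authors delegated to the literature. On its merits, your proof is correct and is essentially the standard argument underlying the cited result. The commutation check $X_pZ_\omega=\omega^{-1}Z_\omega X_p$, hence $X_{(i)}Z_{(i)}=\omega^{-\mu_i}Z_{(i)}X_{(i)}$, is exactly right, so the two generating sets carry equal symplectic Gram matrices; Witt's extension theorem does hold for the nondegenerate alternating form on $\mathbb{F}_p^{2n}$ in every characteristic (including $p=2$), even when the restricted form on the spanned subspace is degenerate, which it is here because of the $\ell$ isotropic vectors; the surjectivity of the Clifford group onto $\mathrm{Sp}(2n,\mathbb{F}_p)$ modulo Paulis and phases is a legitimate external ingredient to cite; and the phase bookkeeping closes because conjugation is multiplicative and every element of $G$ or $B_G$ has a normal form $\omega^t\prod_i\overline{X}_i^{a_i}\prod_j\overline{Z}_j^{b_j}$, so per-element agreement up to scalar follows from per-generator agreement up to scalar. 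Relative to the paper's bare citation, your route costs length but buys a self-contained proof that isolates exactly which structural facts are needed.

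The one hypothesis you use silently is that the $2c+\ell$ vectors $f(\overline{X}_1),\dots,f(\overline{X}_c),f(\overline{Z}_1),\dots,f(\overline{Z}_{c+\ell})$ are linearly independent. Your symplectic-pairing argument forces the coefficients of the hyperbolic generators to vanish in any putative dependence, but the stated commutation relations alone do not rule out a dependence among the isotropic vectors $f(\overline{Z}_{c+1}),\dots,f(\overline{Z}_{c+\ell})$; without independence, the assignment of generators to generators need not be a well-defined isometry between the spans, and Witt's theorem cannot be invoked. This independence is implicit in the paper's setup (it is what makes the parameter count $\ell=n-c-k$ meaningful, since the generators are meant to be a minimal generating set), so your proof stands once you state it as an explicit assumption rather than leave it tacit.
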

We define $X'_{(i)}, Z'_{(j)},$ as:
\begin{equation*}
    \begin{array}{ll}
        X'_{(i)} = X_{(i)}\otimes I^{\otimes i-1}\otimes X_p^{\mu_i}\otimes I^{\otimes c-i}, & 1\leq i\leq c,\\
        Z'_{(j)} = Z_{(j)}\otimes I^{\otimes{j-1}}\otimes Z^{-1}_\omega\otimes I^{\otimes{c-j}}, & 1\leq j\leq c,\\
        Z'_{(j)} = Z_{(j)}\otimes I^{\otimes c}, & c< j \leq c+\ell.
    \end{array}
\end{equation*}
Let $B_G'$ be a group generated by $\{Z'_{(1)},\dots,Z'_{(c+\ell)},X'_{(1)},\dots,X'_{(c)}\}$.
Then, $B_G'$ is a stabilizer of ${E_{n+c}}$ because $B_G'$ is a commutative subgroup of ${E_{n+c}}$.
For an arbitrary $k$ qudits $\ket{\psi_k}$, 
the codeword $\ket{\Psi}$ of a stabilizer code $Q(B_G')$ can be written as follows:
\begin{equation*}
        \ket{\Psi} = 
    \sum_{(i_1,\ldots,i_c)\in\mathbb{F}_p^c}
    \frac{1}{\sqrt{p^c}}
    \ket{i_1}\dots\ket{i_c}
    \ket{0}^{\otimes \ell}
    \ket{\psi_k}
    \ket{i_1}\dots\ket{i_c}
\end{equation*}
where the pairs of $j$th and $(n+j)$th qudits $(j=1,2,\dots,c)$ of $\ket{\Psi}$ form maximally entangled pairs. 
The $(n+1)$th through $(n+c)$th qudits of $\ket{\Psi}$ are the receiver's $c$ halves of maximally entangled pairs.
We define $\overline{Z}'_{i},\overline{X}'_{i}$ as
\begin{equation*}
    \begin{array}{ll}
        \overline{X}'_{i} = \overline{X}_{i}\otimes I^{\otimes i-1}\otimes X_p^{\mu_i}\otimes I^{\otimes c-i}, & 1\leq i\leq c,\\
        \overline{Z}'_{j} = \overline{Z}_{j}\otimes I^{\otimes{j-1}}\otimes Z^{-1}_\omega\otimes I^{\otimes{c-j}}, & 1\leq j\leq c,\\
        \overline{Z}'_{j} = \overline{Z}_{j}\otimes I^{\otimes c}, & c< j \leq c+\ell.
    \end{array}
\end{equation*}
Let $G'$ be a group generated by $\{\overline{Z'}_{1},\dots,\overline{Z'}_{c+\ell},\overline{X'}_{1},\dots,\overline{X'}_{c}\}$.
Then, $G'$ is a stabilizer of ${E_{n+c}}$ because $G'$ is a commutative subgroup of ${E_{n+c}}$.
We define a stabilizer code $Q(G')$. 
From Lemma 1, 
a code space $Q(G')$ is given by 
\begin{equation*}
    Q(G') = \{(U\otimes I^{\otimes c})\ket{\Psi}\mid 
    \ket{\Psi}\in Q(B_G')
    \}.
\end{equation*} 
The sender performs the encoding operation $U$ on information qudits $\ket{\psi_k}$, the sender's halves of the entangled pair, and $\ell = n-k-c$ ancilla qudits.
The sender then sends $n$ qudits through an noisy channel to the receiver. 
The receiver combines the received $n$ qudits and the receiver's $c$ halves of the entangled pair.
The receiver correct the erasures in the resulting $(n+c)$ qudits by the stabilizer code $Q(G')$, 
and decode the information qudits $\ket{\psi_k}$ by performing $(U\otimes I^{\otimes c})^{-1}$.

Then $(G,G',U)$ is an $[[n,k,d;c]]_p$ EAQECC 
that employs $c$ maximally entangled pairs and $\ell = n-k-c$ ancilla qudits 
to encode $k$ infromation qudits.
The erasure correcting ability of $(G,G',U)$, including receiver's halves of maximally entangled pairs, 
is the same as a stabilizer code $Q(G')$.

\begin{example}\label{ex_EAQECC}
    We define a non-abelian subgroup $G\subset E_3$ generator $\{\overline{X}_1,\overline{Z}_1\}$ as follows: 
    \begin{eqnarray*}
        \begin{array}{ccclclclc}
            \overline{X}_1 &=& X_2&\otimes&X_2&\otimes&X_2,\\
            \overline{Z}_1 &=& Z_{-1}&\otimes&Z_{-1}&\otimes&Z_{-1}.
        \end{array}
    \end{eqnarray*}
    Then, generators of $B_G$ can be written as follows:
    \begin{eqnarray*}
        \begin{array}{ccclclclc}
            {X}_{(1)} &=& X_2&\otimes&I_2&\otimes&I_2,\\
            {Z}_{(1)} &=& Z_{-1}&\otimes&I_2&\otimes&I_2.
        \end{array}        
    \end{eqnarray*}
    From Lemma 1, we can find a unitary matrix $U$ as follows:
    \begin{eqnarray*}
        U = \ket{000}\bra{000} + \ket{001}\bra{111} + \ket{010}\bra{110} + \ket{011}\bra{010}\\ 
        + \ket{100}\bra{011} + \ket{101}\bra{100} + \ket{110}\bra{110} + \ket{111}\bra{001}.
    \end{eqnarray*}
    Generators $\{{X'}_{(1)},{Z'}_{(1)}\}$ of $B'_G$ can be written as follows:
    \begin{eqnarray*}
        \begin{array}{ccclclclclc}
            {X'}_{(1)} &=& X_2&\otimes&I_2&\otimes&I_2&\otimes&X_2,\\
            {Z'}_{(1)} &=& Z_{-1}&\otimes&I_2&\otimes&I_2&\otimes&Z_{-1}.
        \end{array}        
    \end{eqnarray*}
    $B_G'$ is a stabilizer of $E_{3+1}$. For an arbitrary $2$ qubits $\ket{\psi_2}$, the codeword $\ket{\Psi}$ of a stabilizer code $Q(B_G')$ can be written as follows:
    \begin{equation*}
            \ket{\Psi} = 
        \frac{1}{\sqrt{2}}
        \Bigl(\ket{0}\otimes\ket{\psi_2}\otimes\ket{0} + \ket{1}\otimes\ket{\psi_2}\otimes\ket{1}\Bigr).
    \end{equation*}

    Then, generators $\{\overline{X'}_{1},\overline{Z'}_{1}\}$ of $G'$ can be written as follows:
    \begin{eqnarray*}
        \begin{array}{ccclclclclc}
            \overline{X'}_1 &=& X_2&\otimes&X_2&\otimes&X_2&\otimes&X_2,\\
            \overline{Z'}_1 &=& Z_{-1}&\otimes&Z_{-1}&\otimes&Z_{-1}&\otimes&Z_{-1}.
        \end{array}
    \end{eqnarray*}
    $G'$ is a stabilizer of $E_{3+1}$. The code space $Q(G')$ is given by
    \begin{equation*}
        Q(G') = \{(U\otimes I_2)\ket{\Psi}\mid 
        \ket{\Psi}\in Q(B_G')
        \}.
    \end{equation*}
    Then $(G, G',U)$ is an $[[3,2,2;1]]_p$ EAQECC 
    that employs $1$ maximally entangled pairs and $0$ ancilla qudits 
    to encode $2$ infromation qudits.
\end{example}
\subsection{Stabilizer-based QSS}
We review a stabilizer-based QSS \cite{QSS_Cleve_1999}. 
It is accomplished by the following steps:
\begin{enumerate}
    \item A dealer encodes a quantum secret by a stabilizer code.
    \item The dealer distributes each qudit of that codeword to a participant.
\end{enumerate}
There are some procedures to restore the secret for stabilizer-based QSS \cite{Unitary_reconstruction_Matsumoto_2017}.
One of the simplest procedure is to use erasure correction of the stabilizer code \cite{QSS_Cleve_1999}.
The access structure of a stabilizer-based QSS depend on the stabilizer code. 
Stabilizer-based QSS can construct a ramp QSS from $[[n,k]]_p$ stabilizer codes with $k\geq 2$.
\section{QSS constructed from EAQECC}\label{sec:3}
In this section, we propose a scheme of advance sharing for QSS by EAQECC.
First, we construct an EAQECC from a stabilizer 
to clarify the access structure of our proposal. 
Second, we propose a construction of QSS from EAQECC. 
Third, we clarify necessary and sufficient conditions for an index set $J\subset\{1,2,\dots,n\}$ 
to be an advance shareable set.
Finaly, we present a sufficient condition of advance shareable set for an index set $J\subset\{1,2,\dots,n\}$.
\subsection{EAQECC constructed from a stabilizer}
For a stabilizer $S$, 
we introduce a construction of an EAQECC 
that has the same erasure correcting ability, including receiver's halves of maximally entangled pairs, 
as its stabilizer code $Q(S)$.

Let $J\subset\{1,2,\dots,n\}$ be an index set.
For a check matrix $H_S$ of a stabilizer and an index set $J$, we define conditions 1 and 2 as follows:
\begin{enumerate}
    \item For $j\in J$, $j$th column of $H_S$ has 1 at only $j$th row and its other rows are 0.
    \item For $j\in J$, $(n+j)$th column of $H_S$ has 1 at only $(n+j)$th row and its other rows are 0.
\end{enumerate}

\begin{lemma}\label{matrix}
    If a check matrix $H_S$ of a stabilizer $S$ satisfy the conditions 1 and 2, 
    there exists a check matrix $H'_S$ of $S$ that is written as follows:
    \begin{equation*}
        H'_{S}=
        \left[
        \begin{array}{cccc|cccc}
            h_{1,1}&\cdots&0&\cdots&\cdots&0&\cdots&h_{1,2n}\\
            \vdots&&\vdots&&&\vdots&&\vdots\\
            h_{i,1}&\cdots&\mu_i&\cdots&\cdots&0&\cdots&h_{i,2n}\\
            \vdots&&\vdots&&&\vdots&&\vdots\\
            h_{i+|J|,1}&\cdots&0&\cdots&\cdots&-1&\cdots&h_{i+|J|,2n}\\
            \vdots&&\vdots&&&\vdots&&\vdots\\
            h_{n-k,1}&\cdots&0&\cdots&\cdots&0&\cdots&h_{n-k,2n}
        \end{array}
        \right]
    \end{equation*}
    where $\mu_i = \sum_{j=1, j\neq i}^{n}h_{i,j}h_{i+|J|,n+j} - \sum_{j=1, j\neq i}^{n}h_{i+|J|,j}h_{i,n+j}$.
    Since $S$ is an abelian subgroup of $E_n$, we have $\mu_i \neq 0$.\hfill $\Box$
\end{lemma}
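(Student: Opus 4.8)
The plan is to read this as a statement about elementary row reduction of the check matrix, exploiting the fact that replacing $H_S$ by any row-equivalent matrix leaves the row space $f(S)$, and hence the stabilizer $S$ itself, unchanged. Thus it suffices to exhibit a sequence of $\mathbb{F}_p$-row operations carrying $H_S$ into the displayed form while tracking the pivot entries. First I would use Conditions 1 and 2 to locate the pivots: for each $j\in J$, Condition 1 says the $X$-block column $j$ is supported on a single row with a nonzero entry, and Condition 2 says the $Z$-block column $n+j$ is likewise supported on a single row. These $2|J|$ rows are distinct, and after permuting rows I would relabel them so that the $X$-type pivot for index $j$ sits in row $i$ and the $Z$-type pivot in row $i+|J|$, with $i$ ranging over $J$.

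Because each of the columns $j$ and $n+j$ ($j\in J$) already contains a single nonzero entry, no elimination across rows is needed to clear them: every row other than the two pivot rows is automatically $0$ in all of these $2|J|$ columns, and each pivot row is $0$ in its partner column (row $i$ vanishes in column $n+i$ by Condition 2, and row $i+|J|$ vanishes in column $i$ by Condition 1). The only normalization I would then perform is to rescale each $Z$-type row so that its $Z$-pivot becomes $-1$, deliberately leaving the entry $h_{i,i}$ of the $X$-type row in column $i$ untouched; this entry is the one displayed as $\mu_i$. This already produces the block structure of $H'_S$, so the content that remains is to identify the value of that entry and to show it is nonzero.

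To identify it with the stated formula I would invoke commutativity of $S$: since $S$ is abelian, $f(S)$ is symplectically self-orthogonal, so $\langle R_i,R_{i+|J|}\rangle_s=0$ for the two pivot rows $R_i,R_{i+|J|}$. Splitting this symplectic inner product into the $j=i$ term and the remaining terms, and using $h_{i,n+i}=0$ together with $h_{i+|J|,i}=0$ and the normalized value $h_{i+|J|,n+i}=-1$, the $j=i$ contribution collapses to $-h_{i,i}$, so orthogonality forces $h_{i,i}=\sum_{j\neq i}h_{i,j}h_{i+|J|,n+j}-\sum_{j\neq i}h_{i+|J|,j}h_{i,n+j}$, which is precisely the definition of $\mu_i$. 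This step does double duty: it shows the displayed matrix is a genuine check matrix of $S$ and that the surviving pivot equals $\mu_i$. The identity $\mu_i=h_{i,i}$ is moreover covariant under rescaling either pivot row, so it is insensitive to the particular normalization used.

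Finally, for $\mu_i\neq 0$ I would note that $h_{i,i}$ is the single nonzero entry of column $i$ guaranteed by Condition 1 and is left nonzero by our operations; combined with $\mu_i=h_{i,i}$ from the previous step, this gives $\mu_i\neq 0$. I expect the main obstacle to be the bookkeeping: checking that clearing all $|J|$ index columns simultaneously does not disturb any pivot value, and that the cross terms between distinct indices $i\neq i'$ in the various symplectic products vanish, so that the single clean formula for each $\mu_i$ survives unperturbed. The abelian hypothesis is the crux throughout, since it is exactly what forces the $X$-pivot value and the off-diagonal sum to coincide, simultaneously yielding the formula and excluding $\mu_i=0$.
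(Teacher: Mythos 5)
The paper never actually proves this lemma: the statement ends with a box and no proof environment follows, so there is no argument of the authors' to compare yours against line by line. Judged against the argument they evidently intend (reconstructible from how $H'_S$ is consumed in the proof of Lemma~\ref{condition_ok} and from the definition $\overline{X}_i\overline{Z}_i=\omega^{-\mu_i}\overline{Z}_i\overline{X}_i$ of $\mu_i$ in Section~\ref{sec:2}), your proof is correct and is precisely that intended argument. Each step checks out: row permutations and nonzero row scalings preserve the row space, hence yield another check matrix of $S$ as the paper defines it; conditions 1 and 2 make the $2|J|$ columns indexed by $J$ singleton columns, so no elimination is needed and normalizing each $Z$-pivot to $-1$ is harmless; and since $M(\vec{a}\mid\vec{b})$ and $M(\vec{c}\mid\vec{d})$ commute exactly when $\langle(\vec{a}\mid\vec{b}),(\vec{c}\mid\vec{d})\rangle_s=0$, abelianness of $S$ makes $f(S)$ symplectically self-orthogonal, so the product of the two pivot rows of a common index vanishes; with $h_{i,n+i}=0$, $h_{i+|J|,i}=0$, $h_{i+|J|,n+i}=-1$ the $j=i$ term collapses to $-h_{i,i}$, forcing $h_{i,i}=\mu_i$ as defined by the displayed sums, and $h_{i,i}\neq 0$ (untouched singleton entry from condition 1) gives $\mu_i\neq 0$. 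This is exactly the logic behind the paper's one-line remark that abelianness yields $\mu_i\neq 0$.

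Two points are worth making explicit if you write this up. First, the distinctness of the $2|J|$ pivot rows is not something you can derive; it is part of what conditions 1 and 2 are intended to assert (their literal indexing, a pivot in the ``$(n+j)$th row,'' cannot be taken at face value since $H_S$ has only $n-k$ rows), and your proof correctly treats it as hypothesis. Second, under the literal reading in which each singleton entry equals $1$, your identity $\mu_i=h_{i,i}$ actually pins down $\mu_i=1$; the general formula carries independent content only under the looser reading where the singleton entries are arbitrary nonzero scalars. Your argument covers both readings, so neither point is a gap.
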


Let $S$ be a stabilizer of $E_n$.
Let $\overline{J}=\{1,\dots,n\}\backslash J$.
\begin{lemma}\label{condition_ok}
    There are a unitary matrix $U_J$ and non-abelian subgroup $S^J$ of $E_{n-|J|}$
    such that 
    $(S^J, S, U_J)$ is an $[[n-|J|,k,d;|J|]]_p$ EAQECC that has the same erasure correcting ability as its stabilizer code $Q(S)$ 
    if $J$ and a check matrix $H_S$ of $S$ satisfy the conditions 1 and 2.
\end{lemma}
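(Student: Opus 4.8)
The plan is to run the EAQECC construction of Section~\ref{sec:2} \emph{in reverse}: I treat $S$ as the final stabilizer $G'$ of an EAQECC on $n = (n-|J|) + |J|$ qudits and recover the inner non-abelian group $G = S^J$ together with the unitary $U_J$, with $c = |J|$ playing the role of the number of maximally entangled pairs.

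First, after permuting qudits so that the indices of $J$ occupy the last $|J|$ positions (a simultaneous column permutation of the $X$- and $Z$-blocks of $H_S$ that preserves abelian-ness, distance, and the symplectic form), I would invoke Lemma~\ref{matrix} to put $H_S$ into the normal form $H'_S$. Conditions 1 and 2 then say precisely that each qudit indexed by $j\in J$ is supported in exactly two generators of $S$ — once as a pure $X_p$-power and once as a pure $Z_\omega$-power — and is the identity in every other generator. Reading off the rows of $H'_S$, I obtain a generating set of $S$ of exactly the primed shape $\{\overline{Z}'_1,\dots,\overline{Z}'_{c+\ell},\overline{X}'_1,\dots,\overline{X}'_c\}$ of the construction, where receiver-half qudit $i$ carries $X_p^{\mu_i}$ in $\overline{X}'_i$, carries $Z_\omega^{-1}$ in $\overline{Z}'_i$, and is the identity in all remaining generators.

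I would then \emph{define} $S^J$ as the subgroup of $E_{n-|J|}$ generated by the restrictions $\{\overline{Z}_j,\overline{X}_i\}$ of these generators to the first $n-|J|$ qudits (deleting the receiver halves). The heart of the argument is to check that $S^J$ is non-abelian with the canonical commutation relations, which is a symplectic-inner-product computation: since $S$ is abelian, the full symplectic product of any two primed generators over all $n$ qudits is $0$, so the product of the restricted generators over the first $n-|J|$ qudits equals the negative of the contribution from the receiver halves. For two $\overline{X}$'s or two $\overline{Z}$'s this contribution vanishes, and for $\overline{X}_i,\overline{Z}_j$ with $i\neq j$ it vanishes because receiver halves $i$ and $j$ are disjoint; for the pair $\overline{X}_i,\overline{Z}_i$ the receiver half $i$ contributes exactly $-\mu_i$, so the restricted product equals $\mu_i\neq 0$ (the $\mu_i$ of Lemma~\ref{matrix}, nonzero precisely because $S$ is abelian). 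Hence $[\overline{X}_i,\overline{Z}_i]\neq 0$ while all other pairs commute, giving a non-abelian $S^J$ with exactly the required relations.

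Finally, I would take $U_J$ to be the unitary produced by Lemma~\ref{isomorphic} applied to $G=S^J$, and run the forward construction of Section~\ref{sec:2} on $S^J$. Because the primed generators $\overline{X}'_i,\overline{Z}'_j$ reassembled from $S^J$ are by construction exactly the generators of $S$, the resulting stabilizer $(S^J)'$ equals $S$; thus $(S^J,S,U_J)$ is a genuine $[[n-|J|,k,d;|J|]]_p$ EAQECC. Its distance is $d(S)=d$ and, since $(S^J)'=S$, its erasure-correcting ability including the receiver's $|J|$ halves coincides with that of $Q(S)$ by the property stated in Section~\ref{sec:2}. I expect the third step — the careful verification that the restricted generators satisfy exactly the canonical relations with the correct $\mu_i$ — to be the main obstacle, but conditions 1 and 2, through the normal form of Lemma~\ref{matrix}, are tailored so that each receiver half couples to a single $\overline{X}/\overline{Z}$ pair, which is what makes the bookkeeping close.
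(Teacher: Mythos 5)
Your proof is correct and follows essentially the same route as the paper's: both use Lemma~\ref{matrix} to put $H_S$ into normal form, define $S^J$ by deleting the $J$-coordinates of the resulting generators, obtain $U_J$ from Lemma~\ref{isomorphic}, and observe that the reassembled primed stabilizer is $S$ itself, so the erasure-correcting ability is that of $Q(S)$. Your explicit symplectic-inner-product check that the restricted generators satisfy the canonical commutation relations with the correct phases $\mu_i\neq 0$ is precisely the content the paper compresses into its unproved assertion that $S^J$ and $B^J_S$ are isomorphic, so if anything your write-up is the more complete of the two.
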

\begin{proof}
    From Lemma \ref{matrix}, if a check matrix $H_S$ of $S$ satisfy the conditions 1 and 2,
    there exists a check matrix $H'_S$ of $S$ that is written as follows:
    \begin{equation*}
        H'_{S}=
        \left[
        \begin{array}{cccc|cccc}
            h_{1,1}&\cdots&0&\cdots&\cdots&0&\cdots&h_{1,2n}\\
            \vdots&&\vdots&&&\vdots&&\vdots\\
            h_{i,1}&\cdots&\mu_i&\cdots&\cdots&0&\cdots&h_{i,2n}\\
            \vdots&&\vdots&&&\vdots&&\vdots\\
            h_{i+|J|,1}&\cdots&0&\cdots&\cdots&-1&\cdots&h_{i+|J|,2n}\\
            \vdots&&\vdots&&&\vdots&&\vdots\\
            h_{n-k,1}&\cdots&0&\cdots&\cdots&0&\cdots&h_{n-k,2n}
        \end{array}
        \right].
    \end{equation*}
    Then, we can define generators $\{G_1,\ldots,G_{n-k}\}$ of $S$ as follows:
    \begin{eqnarray*}
        \begin{array}{cc}
            G_i = \bigotimes^{n}_{j=1} X_p^{h_{i,j}}Z_\omega^{h_{i,n+j}}, & i = 1,\ldots,n-k            
        \end{array}
    \end{eqnarray*}
    where $h_{i,j}$ is the $(i,j)$ component of $H'_S$.
    We define $\{G^J_1,\ldots,G^J_{n-k}\}$ as follows:
    \begin{equation*}
        G^J_i = \bigotimes^{n}_{\substack{j=1,\\j \notin J}} X_p^{h_{i,j}}Z_\omega^{h_{i,n+j}},  i = 1,\ldots,n-k.
    \end{equation*}
    Let $S^J$ be a subgroup of ${E_{n-|J|}}$ generated by $\{G^J_1,\ldots,G^J_{n-k}\}$.
    We define $\{x_1,x_2,\ldots,x_{|J|}\}\subset \overline{J}$ and 
    $\{z_{2|J|+1},z_{2|J|+2},\ldots,z_{n-k}\}\subset \overline{J}$ such that all of $x_i, z_i$ are 
    different from each other.
    We define $g_{j,l}$ as follows:
    \begin{equation*}
        \begin{array}{cc}
            g_{j,j} = g_{j,x_j} = X_p^{\mu_j}, & j \in J,\\
            g_{j+|J|,j} = Z_\omega, & j \in J,\\
            g_{j+|J|,x_j} = Z^{-1}_\omega, & j \in J,\\
            g_{i,z_i} = Z_\omega, & i \in\{2|J|+1,\ldots,n-k\},\\
            g_{i,j} = I_p, & \text{otherwise}.
        \end{array}
    \end{equation*}
    We define $\{{G'}^J_1,\ldots,{G'}^J_{n-k}\}$ and $\{{G'}_1,\ldots,{G'}_{n-k}\}$ as follows:
    \begin{eqnarray*}
        {G'}^J_j = \bigotimes^{n}_{\substack{l=1,\\l \notin J}} g_{j,l},\\
        {G'}_j = \bigotimes^{n}_{\substack{l=1}} g_{j,l}.
    \end{eqnarray*}
    Let $B^J_S$ be a subgroup of ${E_{n-|J|}}$ generated by $\{{G'}^J_1,\ldots,{G'}^J_{n-k}\}$.
    Considering the mapping $G^J_j \mapsto {G'}^J_j$, we see that $S^J$ and $B^J_S$ are isomorphic.
    Since the groups $B^J_S$ and $S^J$ are isomorphic as groups, 
    there exists a unitary $U_J$ such that 
    for all $b\in B^J_S$ there exists an $g\in S^J$ such that $b=U_JgU_J^\dagger$ up to overall phase.

    Let $B_S$ be a subgroup of ${E_{n}}$ generated by $\{{G'}_1,\ldots,{G'}_{n-k}\}$.
    Let $Q(B_S)$ be a stabilizer code of $B_S$. 
    Since $S^J$ and $B^J_S$ are isomorphic and we have $X_p^{h_{i,j}}Z_\omega^{h_i,n+j} = g_{i,j}$ for $j\in J$, $i = 1,\ldots,n-k$, the groups $S$ and $B_S$ are isomorphic.
    Then, we can construct an $[[n-|J|,k,d;|J|]]_p$ EAQECC $(S^J, S ,U_J)$ by the procedure in Section \ref{sec:2}.
    In the decoding procedure of $(S^J, S ,U_J)$, we correct the erasures by the stabilizer code $Q(S)$.
    So, $(S^J, S, U_J)$ has the same erasure correcting ability as $Q(S)$.
\end{proof}

\begin{example}\label{EAQECC_from_stabilizer}
    Let $S$ be a stabilizer defined in Example \ref{ex_stabilizer}.
    Let $J = \{4\}$ be an index set.
    Here, a check matrix $H_S$ of $S$, which is defined in Example 1, satisfies the conditions 1 and 2.
    We define $S^J$ and its generator $\{G^J_1,G^J_2\}$ as follows:
    \begin{eqnarray*}
        \begin{array}{ccclclclc}
            G^J_1 &=& X_2&\otimes&X_2&\otimes&X_2,\\
            G^J_2 &=& Z_{-1}&\otimes&Z_2&\otimes&Z_2.
        \end{array}        
    \end{eqnarray*}
    Then, $S^J$ is the same as $G$ in Example 2.
    Therefore, $(S^J,S,U_J)$ is a $[[3,2,2;1]]_p$ EAQECC.
    In the decoding procedure of $(S^J, S ,U_J)$, we correct the erasures by the $[[4,2,2]]_p$ stabilizer code $Q(S)$.
\end{example}

\subsection{Our proposed encoding method for QSS}
Let $S$ be a stabilizer of $E_n$.
Let $J$ be an index set that satisfies the conditions 1 and 2 with a check matrix $H_S$ of $S$.
From Lemma \ref{condition_ok}, we can define $S^J$ and $U^J$ such that $(S^J,S,U_J)$ is an $[[n-|J|,k,d;|J|]]_p$ EAQECC
that has the same erasure correcting ability as the stabilizer code $Q(S)$.
We propose a QSS for quantum secrets with an index set $J$ being advance shareable as following:
\begin{enumerate}
    \item A dealer prepares $|J|$ pairs of maximally entangled states and distributes $|J|$ halves of the maximally entangled states to participants in $J$.
    \item The dealer encodes a $k$-qudit quantum secret $\ket{\psi_k}$ into $n-|J|$ qudits of a codeword of EAQECC $(S^J,S,U_J)$. 
    \item The dealer distributes each qudit of the encoded state to the remaining participants.
\end{enumerate}
A qualified set of participants can get a codeword of $Q(S)$ with erasures by attaching arbitrary qudits as the missing shares to available shares. 
Then, they can restore the secret by the erasure correcting of the stabilizer code $Q(S)$. 
In our proposed QSS, 
shares of an index set $J$ can be distributed to some participants before a given secret.
Our proposed QSS from $S$ and $J$ has the same access structure of the stabilizer-based QSS constructed from $S$.
Since $[[n,k,d]]_p$ stabilizer codes can correct less than $d$ erasure, 
the set of $n+1-d$ or more shares is a qualified set of our proposal.
From \cite[Proposition 3]{rampQSS_Ogawa_2005}, the set of less than $d$ shares is a forbidden set of our proposal.

\begin{remark}
    Our poposed QSS is constructed by using an $[[n-|J|,k,d;|J|]]_p$ EAQECC.
    So, the size of advance shareable set $|J|$ is the number of maximally entangled pairs of EAQECC.
    Therefore, our proposal shows the significance of constructing an EAQECC with a large number of maximally entangled pairs.
\end{remark}

\begin{example}
    Let $S$ be a stabilizer defined in Example 1.
    We define an index set $J=\{4\}$ as an advance shareable set.
    Let $(S^J,S,U_J)$ be a $[[3,2,2;1]]_p$ EAQECC defined in Example 3.
    Our proposed encoding with an index set $J$ being advance shareable is as follows:
    \begin{enumerate}
        \item A dealer prepares a maximally entangled pair and distribute a halve of the maximally entangled state to 4th participant.
        \item The dealer encodes a $2$-qubit quantum secret $\ket{\psi_2}$ into $3$ qudits of a codeword of the EAQECC. 
        \item The dealer distributes each qubit of the encoded state to the remaining participants.
    \end{enumerate}
    The shares are codeword of $Q(S)$ that is capable of correcting 1 erasure, 
    so 3 or more participants can restore the secret.
    This QSS encodes $2$ qubits of a quantum secret into $4$ shares in such a way 
    that any $3$ or more shares can restore the secret 
    while any single share has no information about the secret.
    So, this is a $(3,2,4)$ ramp QSS. 
    This ramp QSS cannot be constructed by the scheme of Lie et al.\ \cite{Lie_2020}.
    In addition, since the dimension of a share is 2 and the number of participants is 4, 
    this ramp QSS cannot be constructed by the scheme of Ogawa et al.\ \cite{rampQSS_Ogawa_2005}.
\end{example}

\subsection{Necessary and sufficient condition of advance shareable sets}
Shortening in this paper refers to making a new linear code $C'\subset \mathbb{F}^{2n-2}_p$ from a linear code $C\subset \mathbb{F}_p^{2n}$ 
by selecting vectors in $C$ where the $i$th and the $(n+i)$th components $(1\leq i\leq n)$ are both zero and then eliminating the $i$th and the $(n+i)$th components of the selected vectors.
Let $C_{(s)}^{(J)}$ be the code obtained by shortening the linear code $C$ for the element corresponding to the index set $J\subset\{1,\dots,n\}$.

We clarify a necessary and sufficient condition that a set of shares $J$ is an advance shareable in our proposal.
\begin{theorem}\label{junbi}
    Let $S$ be a stabilizer of ${E_{n}}$. 
    An index set $J\subset\{1,\dots,n\}$ and a check matrix $H_S$ satisfy the conditions 1 and 2 
    if and only if the equation
    \begin{equation}
        \dim{f(S)_{(s)}^{(J)}}=\dim{f(S)}-2|J|
    \end{equation}
    holds.
\end{theorem}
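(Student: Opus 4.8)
The plan is to translate the shortening operation into a statement about a coordinate projection, so that the whole equivalence reduces to a rank condition on a block of $2|J|$ columns of the check matrix, which is then matched against conditions 1 and 2 by elementary row reduction.

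First I would fix the coordinate set $P_J = \{\,j : j\in J\,\}\cup\{\,n+j : j\in J\,\}\subset\{1,\dots,2n\}$ consisting of $2|J|$ positions, and let $\pi_{P_J}\colon\mathbb{F}_p^{2n}\to\mathbb{F}_p^{2|J|}$ be the projection onto these coordinates. By the definition of shortening, $f(S)^{(J)}_{(s)}$ is obtained from the vectors of $f(S)$ that vanish on all positions in $P_J$ by deleting those positions; since the deleted entries are already zero, this deletion is injective on the subspace $\{v\in f(S) : \pi_{P_J}(v)=0\}=\ker\bigl(\pi_{P_J}|_{f(S)}\bigr)$. Hence $\dim f(S)^{(J)}_{(s)}=\dim\ker\bigl(\pi_{P_J}|_{f(S)}\bigr)$, and the rank--nullity theorem gives $\dim f(S)^{(J)}_{(s)}=\dim f(S)-\dim\pi_{P_J}(f(S))$. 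Therefore the asserted identity $\dim f(S)^{(J)}_{(s)}=\dim f(S)-2|J|$ is \emph{equivalent} to $\dim\pi_{P_J}(f(S))=2|J|$.

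Next I would observe that $\pi_{P_J}(f(S))$ is exactly the row space of the submatrix of $H_S$ formed by the columns indexed by $P_J$ (a general element $x^{\!\top}H_S$ of $f(S)$ restricts on $P_J$ to $x^{\!\top}H_S[{:},P_J]$). Its dimension is the rank of this $(n-k)\times 2|J|$ submatrix, which equals $2|J|$ precisely when its $2|J|$ columns are \emph{linearly independent}. So after this reduction the theorem becomes: \emph{there is a check matrix of $S$ satisfying conditions 1 and 2 if and only if the $2|J|$ columns of $H_S$ indexed by $P_J$ are linearly independent.} The forward direction is then immediate: if a check matrix satisfies conditions 1 and 2, each of the columns $j$ and $n+j$ (for $j\in J$) is a standard basis vector whose unique nonzero entry sits in a distinct row, and $2|J|$ such columns are automatically independent, giving $\dim\pi_{P_J}(f(S))=2|J|$.

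For the converse I would use that row operations on a check matrix preserve its row space $f(S)$ and hence send check matrices of $S$ to check matrices of $S$, while row permutations merely reorder the generators of $S$. Assuming the $P_J$-columns are independent, Gaussian elimination applied to that block (which succeeds exactly because the columns are independent) clears all but one entry in each of these columns and places the $2|J|$ pivots in distinct rows; after relabelling the corresponding generators this yields a check matrix in which columns $j$ and $n+j$ are the required standard basis vectors, i.e.\ conditions 1 and 2 hold. I expect the only genuinely delicate step to be the first one—establishing rigorously that shortening computes $\dim\ker(\pi_{P_J}|_{f(S)})$ and that the pairing of position $j$ with position $n+j$ is consistent with the symplectic coordinate labeling used throughout; once this shortening-equals-kernel identity is secured, the rest is routine linear algebra. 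A secondary point worth stating explicitly is that the literal ``$j$th row / $(n+j)$th row'' wording of conditions 1 and 2 should be read as ``distinct pivot rows,'' which is what the reordering of generators achieves, consistent with Example~\ref{ex_stabilizer}.
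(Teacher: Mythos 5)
Your proof is correct, and it takes a cleaner, more self-contained route than the paper's. The paper proves the forward direction by invoking Lemma~\ref{matrix} to put a check matrix into block form with invertible diagonal blocks ($D_{|J|}$ and $-I_{|J|}$) sitting in the columns indexed by $J$ and $n+J$, and then reads the dimension drop off that form; your rank--nullity reduction ($\dim f(S)^{(J)}_{(s)} = \dim f(S) - \dim \pi_{P_J}(f(S))$, so the equation holds iff the $2|J|$ columns indexed by $P_J$ are linearly independent) is the same computation made explicit, and it has the added virtue of being quantifier-clean, since the rank of the $P_J$ block depends only on $f(S)$ and not on which check matrix one writes down. The genuine difference is in the converse: the paper cites~\cite{Ueno_2022} for the statement that if shortening at a single pair $(j,\,n+j)$ drops the dimension by $2$ then \emph{some} check matrix has singleton columns at that pair, and then jumps to the simultaneous statement for all $j\in J$; that jump is a real gap, because the pair-by-pair check matrices produced this way need not be one and the same matrix. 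Your simultaneous Gaussian elimination on the full-column-rank $P_J$ block---legitimate since row operations and row permutations preserve the row space and hence send check matrices of $S$ to check matrices of $S$---proves the simultaneous statement directly and closes that gap with no external citation. Finally, both of your closing caveats are needed rather than cosmetic: the literal wording of conditions 1 and 2 (a $1$ in the $j$th and $(n+j)$th \emph{rows}) cannot be taken verbatim, since $H_S$ has only $n-k$ rows, and the correct reading, consistent with Lemma~\ref{matrix} and Example~\ref{ex_stabilizer}, is that the $2|J|$ singleton columns have pairwise distinct pivot rows; moreover that distinctness is exactly what your forward direction uses, since two singleton columns sharing a pivot row would be linearly dependent and the rank argument would collapse.
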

\begin{proof}
    For ease of presentation, without loss of generality we may assume ${J}=\{1,\dots, |{J}|\}$ and $\overline{J}=\{|{J}|+1,\dots,n\}$, by reordering indicies. 
    First, we prove $\dim{f(S)_{(s)}^{(J)}}=\dim{f(S)}-2|J|$ if $J$ and $H_S$ satisfy the conditions 1 and 2.
    From Lemma \ref{matrix}, the check matrix of stabilizer $S$ is written as follows:
    \begin{equation*}
        H_{S}=
        \left[
        \begin{array}{cc|cc}
            D_{|J|} &  A  &  0  &  B'  \\
            0&  A' &-I_{|J|}&B\\
            0&E&0&F
        \end{array}
        \right],
    \end{equation*}
    where $A,B,A',B'$ are $|J|\times(n-|J|)$ matrices, $E,F$ are $(n-k-2|J|)\times(n-|J|)$ matrices
    and $D_{|J|}$ is a diagonal matrix whose $i$th diagonal components are $\mu_i$ that is defined in Lemma \ref{matrix}.
    Since the row space of $H_S$ is $f(S)$, we obtain $\dim{f(S)}^{(J)}_{(s)}= \dim{f(S)}-2|J|$.

    Second, we prove that there exist $H_S$ satisfy the conditions 1 and 2 
    if $\dim{f(S)_{(s)}^{(J)}}=\dim{f(S)}-2|J|$.
    When the dimension of $f(S)$ is reduced by 2 by shortening for $j$th and $(n+j)$th columns, 
    there is a check matrix $H_S$ that can be written as follows \cite{Ueno_2022}:
    \begin{equation*}
        H_{S}=
        \left[
        \begin{array}{cccc|cccc}
            h_{1,1}&\cdots&0&\cdots&\cdots&0&\cdots&h_{1,2n}\\
            \vdots&&\vdots&&&\vdots&&\vdots\\
            h_{j,1}&\cdots&1&\cdots&\cdots&0&\cdots&h_{j,2n}\\
            \vdots&&\vdots&&&\vdots&&\vdots\\
            h_{j+|J|,1}&\cdots&0&\cdots&\cdots&1&\cdots&h_{j+|J|,2n}\\
            \vdots&&\vdots&&&\vdots&&\vdots\\
            h_{n-k,1}&\cdots&0&\cdots&\cdots&0&\cdots&h_{n-k,2n}
        \end{array}
        \right].
    \end{equation*}
    For all of $j\in J$, the dimension of $f(S)$ is reduced by 2 by shortening for $j$th and $(n+j)$th columns.
    Therefore, there exist $H_S$ satisfy the conditions 1 and 2.
\end{proof}

\begin{example}
    Let $S$ be a stabilizer defined in Example 1.
    Let $H_S$ be a check matrix of $S$ defined in Example 1.
    Here, $H_S$ satisfy the conditions 1 and 2.
    Let $J = \{4\}$ be an index set.
    Since we have $f(S)_{(s)}^{(J)} = \{\vec{0}\}$, we have $\dim{f(S)_{(s)}^{(J)}}=0$.
    Therefore, we have $\dim{f(S)_{(s)}^{(J)}}=\dim{f(S)}-2|J|$.
\end{example}

\subsection{Suffcient condition of advance shareable sets}
We present a sufficient condition that a set of shares $J$ is advance shareable in our propsal.
Let $S$ be a stabilizer of $E_n$.
Let $Q(S)$ be a $[[n,k]]_p$ stabilizer code.
Let $J$ be an index set. 
Puncturing in this paper refers to making a new linear code $C'\subset\mathbb{F}_p^{2n-2}$ from a linear code $C\subset\mathbb{F}_p^{2n}$ by eliminating the $i$th and the $(n+i)$th comoponents $(1\leq i\leq n)$ of all vectors in $C$.
Let $C_{(p)}^{(J)}$ be the code obtained by puncturing the linear code $C$ for the element corresponding to the index set $J\subset\{1,\dots,n\}$. 
We define the sympectic weight of $(\vec{a}|\vec{b})\in C$ as $w_s(\vec{a}|\vec{b}) = \sharp\{i\mid (a_i,b_i)\neq0\}$.
We define the minimum weight of $C$ as $d_{min}(C) = \min\{w_s(\vec{a}|\vec{b})\mid (\vec{a}|\vec{b})\in C, (\vec{a}|\vec{b})\neq \vec{0}\}$.

The following sufficient condition can be verified without computing dimensions of linear spaces.
\begin{theorem}
    \begin{equation*}
        \dim{f(S)_{(s)}^{(J)}}=\dim{f(S)}-2|J|
    \end{equation*}
    if 
    \begin{equation}
        |J|<d_{min}(f(S)^\perp)
    \end{equation}
    holds.
\end{theorem}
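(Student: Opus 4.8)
The plan is to push the shortening on the left-hand side over to the symplectic dual $f(S)^{\perp}$, where the hypothesis $|J|<d_{min}(f(S)^{\perp})$ can be invoked directly. Write $P = J\cup\{\,n+j : j\in J\,\}\subseteq\{1,\dots,2n\}$, a set of $2|J|$ coordinates, and let $Z_P=\{v\in\mathbb{F}_p^{2n}: v_q=0\text{ for all }q\in P\}$, so $\dim Z_P = 2n-2|J|$. Deleting coordinates that are identically zero on the selected vectors does not change the dimension, hence $\dim f(S)_{(s)}^{(J)}=\dim\bigl(f(S)\cap Z_P\bigr)$. The Grassmann identity then gives
\begin{equation*}
\dim f(S)_{(s)}^{(J)}=\dim\bigl(f(S)\cap Z_P\bigr)=\dim f(S)+\dim Z_P-\dim\bigl(f(S)+Z_P\bigr)\ge \dim f(S)-2|J|,
\end{equation*}
with equality precisely when $f(S)+Z_P=\mathbb{F}_p^{2n}$, i.e.\ when the coordinate projection $\pi_P\colon f(S)\to\mathbb{F}_p^{P}$ onto the $P$-coordinates is surjective. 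Thus the whole statement reduces to proving that $\pi_P$ is surjective under the hypothesis.

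The next step is a symplectic-orthogonality dictionary that turns surjectivity of $\pi_P$ into a statement about $f(S)^{\perp}$. The index set $P$ is chosen so that each $j\in J$ is accompanied by its symplectic partner $n+j$; consequently the symplectic form restricted to the coordinate block indexed by $P$ is again non-degenerate. I would show that $\pi_P$ fails to be surjective if and only if there is a nonzero $u$ supported on $P$ lying in $f(S)^{\perp}$. Indeed, a nonzero functional vanishing on $\pi_P(f(S))$ extends, by zeros outside $P$, to a vector $\tilde u$ supported on $P$; because $P$ is symplectically closed, $\langle\tilde u,c\rangle_s$ coincides with the restricted pairing of $u$ against $\pi_P(c)$ for every $c\in f(S)$, so $\tilde u\in f(S)^{\perp}$. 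Non-degeneracy of the restricted form then yields the equivalence: $\pi_P$ is surjective if and only if $f(S)^{\perp}$ contains no nonzero vector supported on $P$.

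Finally I would apply the minimum-weight hypothesis. A nonzero $u\in f(S)^{\perp}$ supported on $P=J\cup\{n+j:j\in J\}$ satisfies $(u_i,u_{n+i})=(0,0)$ for every $i\notin J$, so its symplectic weight obeys $w_s(u)\le|J|$. The assumption $|J|<d_{min}(f(S)^{\perp})$ forces $w_s(u)<d_{min}(f(S)^{\perp})$, whence $u=\vec 0$. Therefore $f(S)^{\perp}$ has no nonzero vector supported on $P$, $\pi_P$ is surjective, equality holds in the Grassmann bound, and $\dim f(S)_{(s)}^{(J)}=\dim f(S)-2|J|$. An equivalent packaging, perhaps more in line with the puncturing notation introduced above, is to use the shortening--puncturing duality $(f(S)_{(s)}^{(J)})^{\perp}=(f(S)^{\perp})_{(p)}^{(J)}$ and observe that puncturing is injective on $f(S)^{\perp}$ exactly when no nonzero dual codeword has symplectic support inside $J$; the same minimum-weight bound then gives $\dim(f(S)^{\perp})_{(p)}^{(J)}=\dim f(S)^{\perp}$, and dimension arithmetic finishes the proof. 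I expect the only delicate point to be the symplectic bookkeeping of the middle paragraph, namely checking that restriction to the coordinate block $P$ is compatible with the symplectic form; this works solely because each $j\in J$ is paired with $n+j$ and both are placed in $P$, after which the argument is pure dimension counting together with the minimum-weight inequality.
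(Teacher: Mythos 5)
Your proof is correct, but it takes a genuinely more self-contained route than the paper. The paper's proof imports its key step from outside: it cites Lemma 1.1 of the reference [Ueno 2022] to assert that $|J|<d_{min}(f(S)^\perp)$ implies $\dim (f(S)^\perp)_{(p)}^{(J)} = \dim f(S)^\perp$, combines this with the (also unproven there) shortening--puncturing duality $f(S)_{(s)}^{(J)} = \bigl((f(S)^{\perp})_{(p)}^{(J)}\bigr)^{\perp}$, and finishes with the same dimension arithmetic you give in your closing remark; in other words, your ``equivalent packaging'' at the end \emph{is} the paper's proof, modulo the citation. Your primary argument instead proves the crucial fact from first principles: the Grassmann identity reduces the claim to surjectivity of the coordinate projection $\pi_P$ on $f(S)$, the non-degeneracy of the symplectic form restricted to the paired block $P=J\cup\{n+j: j\in J\}$ converts failure of surjectivity into a nonzero vector of $f(S)^{\perp}$ supported on $P$, and the weight bound $w_s(u)\le |J| < d_{min}(f(S)^\perp)$ kills such a vector. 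What each approach buys: the paper's version is shorter because it leans on known coding-theoretic lemmas, while yours is self-contained, exposes the actual obstruction (a low-weight dual codeword supported on $J$), and sidesteps the symplectic shortening--puncturing duality, which the paper states without justification and which itself requires exactly the kind of ``$P$ is symplectically closed'' bookkeeping you carry out explicitly. Your symplectic-compatibility check (that $\langle \tilde u, c\rangle_s$ depends only on $\pi_P(c)$ when $\tilde u$ is supported on $P$) is the one delicate point, and you handle it correctly.
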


\begin{proof}
    According to Lemma 1.1 of the reference \cite{Ueno_2022}, for $[[n,k]]_p$ stabilizer code $Q(S)$, 
    if $|J|< d_{min}(f(S)^\perp)$ holds, then we have $\dim{f(S)^\bot} = \dim{(f(S)^\bot)_{(p)}^{(J)}}$.
    We have $f(S)_{(s)}^{(J)} = ((f(S)^{\bot})_{(p)}^{(J)})^{\bot}$.
    Then, 
    \begin{eqnarray*}
        \dim{f(S)_{(s)}^{(J)}} &=& \dim{((f(S)^{\bot})_{(p)}^{(J)})^{\bot}}\\
        &=& 2n-2|J|-\dim{(f(S)^\bot)_{(p)}^{(J)}}\\
        &=& 2n-2|J|-\dim{f(S)^\bot}\\
        &=& \dim{f(S)}-2|J|.
    \end{eqnarray*}
\end{proof}
So, an index set $J$ is advance shareable set in QSS based on a $[[n,k]]_p$ stabilizer code $Q(S)$ if $|J|<d_{min}(f(S)^\perp)$ holds.

\begin{example}
    Let $S$ be a stabilizer defined in Example 1.
    The following set is a basis of $f(S)$:
    \begin{equation*}
        \left\{
            \begin{array}{c}
                (1111|0000),\\
                (0000|1111)
            \end{array}
        \right\}.
    \end{equation*}
    Then the following set is a basis of $f(S)^\perp$:
    \begin{equation*}
        \left\{
            \begin{array}{c}
                (0011|0000),\\
                (0101|0000),\\
                (1001|0000),\\
                (0000|0011),\\
                (0000|0101),\\
                (0000|1001)
            \end{array}
        \right\}.
    \end{equation*}
    We have following identity:
    \begin{equation*}
        d_{min}(f(S)^\perp)=2.
    \end{equation*}
    Therefore, if $|J| < 2$ holds, an index set $J$ is advance shareable in our propsal for this stabilizer $S$.
\end{example}
\section{Conclusion}\label{sec:four}
In our paper, 
we propose a quantum secret sharing scheme that 
can distribute some shares 
before a given secret. 
In Section \ref{sec:3}, 
we give a construction of EAQECC from a stabilizer, 
whose erasure correcting ability is the same as the original stabilizer code.
Then, we clarify the access structures of our proposed quantum secret sharing. 
In Example 4, we confirm that our proposal can construct $(3,2,4)$ ramp QSS with 1-qubit shares.
This ramp QSS cannot be constructed by the schemes of Lie et al.\ \cite{Lie_2020} nor Ogawa et al.\ \cite{rampQSS_Ogawa_2005}.
We clarify a necessary and sufficient condition on advance shareable sets.
In Remark 1, our proposal shows the significance of constructing an EAQECC with a large number of maximally entangled pairs.
We give a sufficient condition of advance shareable set that can be verified without using the dimensions of linear spaces.
Therefore, our proposal can provide a useful method of advance sharing when a dealer unable to communicate with some participants after the dealer obtains a secret.

\section{Acknowledgments}
The author would like to thank Professor Tomohiko Uematsu for a helpful advice. 
This work was supported by JST SPRING, Grant Number JPMJSP2106.
\bibliography{papers_liblary}


\end{document}